\documentclass[11pt]{siamltex}

\usepackage{graphicx}
\usepackage{amsmath}
\usepackage{amssymb}
\usepackage{url}
\usepackage{tikz}
\usetikzlibrary{arrows,automata,positioning,shadows}
\numberwithin{equation}{section}

\newcommand{\bm}[1]{\mathbf{#1}}
\newcommand{\ttt}[1]{\texttt{#1}}

\newcommand{\e}{{\mathrm{e}}}
\newcommand{\ii}{{\mathrm{i}}}

\newcommand{\R}{{\mathbb{R}}}

\newcommand{\pib}{{\boldsymbol{\pi}}}

\newcommand{\pqser}{\texttt{PQser} }

\DeclareMathOperator{\spectrsort}{\textsf{spectrsort}}
\DeclareMathOperator{\pqtreeNperm}{\textsf{pqtreeNperm}}

\DeclareMathOperator{\graphvisit}{\textsf{graphvisit}}
\DeclareMathOperator{\getconcomp}{\textsf{getconcomp}}
\DeclareMathOperator{\pnode}{\textsf{pnode}}
\DeclareMathOperator{\qnode}{\textsf{qnode}}
\DeclareMathOperator{\lnode}{\textsf{lnode}}
\DeclareMathOperator{\mnode}{\textsf{mnode}}
\DeclareMathOperator{\length}{length}
\DeclareMathOperator{\factorial}{factorial}


\usepackage{ifthen}
\usepackage{float}
\floatstyle{ruled}
\newfloat{algorithm}{htb}{alg}
\floatname{algorithm}{Algorithm}
%
\newcounter{algo@row}
%
\newcounter{algo@rowindent}
%
\newcommand{\algofont}[1]{\textbf{#1}}
\newcommand{\algonumbersize}[1]{\scriptsize{#1}}
\newcommand{\algopreitem}[1][\arabic{algo@row}]{\texttt{\algonumbersize{#1}}}
\newcommand{\algoitemskip}{\hspace{\value{algo@rowindent}cc}}
%
%
%
\newenvironment{algo}{\vskip.3em\small%
  \begin{list}{\algopreitem\texttt{\algonumbersize{:}}}{%
      \usecounter{algo@row}%
      \setcounter{algo@rowindent}{0}%
      \setlength{\itemindent}{2em}%
      \setlength{\labelwidth}{2em}
      \setlength{\parsep}{0cm}%
    }%
}{
  \end{list}\vskip-.5em
}
%
%
\newcommand{\algonewnestedopen}[2]{
  \newcommand{#1}[1][]{%
    \ifthenelse{\equal{##1}{}}{\item}{\item[{\algopreitem[##1]}]}
    \algoitemskip\algofont{#2}%
    \addtocounter{algo@rowindent}{1}%
    \ignorespaces
  }
}
%
\newcommand{\algonewnestedaux}[2]{
  \newcommand{#1}[1][]{
    \addtocounter{algo@rowindent}{-1}
    \ifthenelse{\equal{##1}{}}{\item}{\item[{\algopreitem[##1]}]}
    \algoitemskip\algofont{#2}%
    \addtocounter{algo@rowindent}{+1}%
    \ignorespaces
  }
}
%
\newcommand{\algonewnestedclose}[2]{
  \newcommand{#1}[1][]{
    \addtocounter{algo@rowindent}{-1}
    \ifthenelse{\equal{##1}{}}{\item}{\item[{\algopreitem[##1]}]}
    \algoitemskip\algofont{#2}%
    \ignorespaces
  }
}
%
\newcommand{\algonewcommand}[2]{
  \newcommand{#1}[1][default]{
    \ifthenelse{\equal{##1}{default}}{\item}{\item[{\algopreitem[##1]}]}%
    \algoitemskip\algofont{#2}%
    \ignorespaces
  }%
}
%
\newcommand{\algonewkeyword}[2]{\newcommand{#1}{\algofont{#2}}}
%
%
\algonewcommand{\STATE}{\ignorespaces}
%
\algonewcommand{\INPUT}{Input: }
\algonewcommand{\pINPUT}{\phantom{Input: }}
%
\algonewcommand{\COMPUTE}{Compute: }
%
\algonewcommand{\OUTPUT}{Output: }
\algonewcommand{\pOUTPUT}{\phantom{Output: }}
%

%
\algonewnestedopen{\IF}{if }
\algonewnestedaux{\ELSEIF}{else if }
\algonewnestedaux{\ELSE}{else }
\algonewnestedclose{\ENDIF}{end if }
%
\algonewnestedopen{\FOR}{for }
\algonewnestedclose{\ENDFOR}{end for }
%
\algonewnestedopen{\WHILE}{while }
\algonewnestedclose{\ENDWHILE}{end while }
%
\algonewcommand{\BREAK}{break}%
\algonewkeyword{\For}{for }%
\algonewkeyword{\To}{to }%
\algonewkeyword{\Do}{do }%
\algonewkeyword{\If}{if }%
\algonewkeyword{\Then}{then }%
\algonewkeyword{\Else}{else }%
\algonewkeyword{\End}{end }%
\algonewkeyword{\AND}{and }%
\algonewkeyword{\True}{true }%
\algonewkeyword{\False}{false }%
\algonewkeyword{\Call}{call }%
\algonewkeyword{\Function}{function }%
\algonewkeyword{\irbleigs}{irbleigs }%
\algonewkeyword{\tridiag}{tridiag}%
\algonewkeyword{\reorth}{reorth}%
\algonewkeyword{\distinct}{distinct }%
\algonewkeyword{\fiedvecs}{fiedvecs }%

\begin{document}


\title{\pqser: a Matlab package for spectral seriation}

\author{A.~Concas\thanks{Dipartimento di Matematica e Informatica,
Universit\`a di Cagliari, viale Merello 92, 09123 Cagliari, Italy. E-mail:
\texttt{anna.concas@unica.it}, \texttt{rodriguez@unica.it}.
Research supported by INdAM-GNCS.}
\and C.~Fenu\thanks{AICES Graduate School, RWTH Aachen University Schinkelstrasse 2a, 52062 Aachen, Germany. E-mail: \texttt{fenu@aices.rwth-aachen.de},}
\and G. Rodriguez\footnotemark[1]}

\pagestyle{myheadings}
\markboth{A.~Concas, C.~Fenu and G.~Rodriguez}{A Matlab package for spectral seriation}


\maketitle

%
%
%
%
%

\begin{abstract} 
Seriation is an important ordering problem which consists of finding
the best ordering of a set of units whose interrelationship is defined by a
bipartite graph.
It has important applications in, e.g., archaeology, anthropology, psychology,
and biology.
This paper presents a Matlab implementation of an algorithm for spectral
seriation by Atkins et al., based on the use of the Fiedler vector of the
Laplacian matrix associated to the problem, which encodes the set of
admissible solutions into a PQ-tree.
We introduce some numerical technicalities in the original algorithm to
improve its performance, and point
out that the presence of a multiple Fiedler value may have a substantial
influence on the computation of an approximated solution, in the presence of
inconsistent data sets.
Practical examples and numerical experiments show how to use the toolbox to
process data sets deriving from real-world applications.

\end{abstract}

\begin{keywords} 
Seriation, Fiedler value, bipartite graphs, PQ-trees, bandwidth
reduction.
\end{keywords} 

\begin{AMS}
65F15, 65F50, 05C82, 91D30
\end{AMS}

\section{Introduction}\label{sec:intro}

\emph{Seriation} is an important ordering problem
whose aim is to find the best enumeration order of a set of units, according
to a given correlation function. The desired order can be characteristic of the
data, a chronological order, a gradient or any sequential structure of the
data. 

The concept of seriation has been formulated in many different ways and appears
in various fields, such as archaeology, anthropology, psychology, and 
biology~\cite{brusco06,genome,matharcheo,mirkin1984}.
In this paper we use the archaeological setting as a metaphor for the seriation
problem.
An important aim of archaeological investigation is to date excavation
sites on the basis of found objects and determine their relative chronology,
i.e., a dating which indicates if a given site is chronologically preceding or
subsequent to another.
In general, relative chronologies are devoid of a direction, in the sense that
the units are placed in a sequence which can be read in both directions.
Relative dating methods can be used where absolute dating methods, such as
carbon dating, cannot be applied.

The available data are usually represented by a \emph{data matrix}, in which
the rows are the archaeological units (e.g., the sites) and the columns
represent the types (the archaeological finds).
Each unit is characterized by the presence of certain artefacts, which are in
turn classified in types. In~\cite{ps05}, the authors refer to the data matrix as either \emph{incidence matrix} or \emph{abundance
matrix}, depending on the archaeological data representation. In the first case, the data are reported by using a binary
representation, i.e., an element in the position $(i,j)$ is equal to $1$ if
type $j$ is present in the unit $i$, and $0$ otherwise.
In the second second case, the data matrix reports the number of objects
belonging to a certain type in a given unit, or its percentage. In this paper,
we will follow the usual terminology used in \emph{complex networks theory} and
we will refer to a binary representation as an \emph{adjacency matrix}, an
example of which is given in Table~\ref{tab:adjacency}. More details can be
found in Section~\ref{sec:mathback}. If
the data matrix represents types of found objects as columns and the locations
(graves, pits, etc.) in which they are found as rows, we can find a
chronological order for the locations by assuming that the types were produced,
or were ``fashionable'', only for a limited period of time. In the light of
this assumption, the purpose of determining a relative chronology results in
obtaining an ordering of the rows and columns of the data matrix that places
the nonzero entries close to the diagonal of the data matrix.

\begin{table}
\caption{Adjacency matrix for archaeological data originated from female
burials at the Bornholm site, Germany; see~\cite{ps05} and the references
therein. The rows report the names of the tombs, the columns the identification
codes of the found \emph{fibulae}.}
\label{tab:adjacency}
\scriptsize
\begin{center}
\begin{tabular}{rcccccccccccc}
\hline\noalign{\smallskip}
&G3&F27&S1&F26&N2&F24&P6&F25&P5&P4&N1&F23\\
\noalign{\smallskip}\hline\noalign{\smallskip}
\textsf{Mollebakken 2} & 1 & 1 & 1 & 1 & 0 & 0 & 0 & 0 & 0 & 0 & 0 & 0\\
\textsf{Kobbea 11}&0&1&1&0&1&1&0&0&0&0&0&0\\ 
\textsf{Mollebakken 1}&1&1&0&1&1&0&1&1&0&0&0&0\\ 
\textsf{Levka 2}&0&1&1&0&1&0&0&1&1&0&0&0\\ 
\textsf{Grodbygard 324}&0&0&0&0&1&1&0&0&0&1&0&0\\ 
\textsf{Melsted 8}&0&0&1&1&0&0&1&1&0&1&0&0\\ 
\textsf{Bokul 7}&0&0&0&0&0&0&1&1&0&0&1&0\\ 
\textsf{Heslergaard 11}&0&0&0&0&0&0&0&1&0&1&0&0\\ 
\textsf{Bokul 12}&0&0&0&0&0&0&0&1&1&0&0&1\\ 
\textsf{Slamrebjerg 142}&0&0&0&0&0&0&0&0&0&1&0&1\\ 
\textsf{Nexo 6} &0&0&0&0&0&0&0&0&0&1&1&1\\ 
\noalign{\smallskip}\hline
\end{tabular}
\end{center}
\end{table}

A closely related problem is the \emph{consecutive ones problem}
(C1P)~\cite{fulkerson65,or09}, whose aim is to find all the permutations of the
rows of a binary matrix that place the $1$'s consecutively in each column. If
such permutations exist, then the matrix is said to have the \emph{consecutive
ones property} for columns. The equivalent property for rows can be similarly
defined. The problem of verifying if a matrix possesses this property has
applications in different fields, such as computational biology and recognition
of interval graphs~\cite{booth1976testing,cor98}.
The connection between C1P and seriation has been investigated by Kendall 
in~\cite{kendall69}.

The first systematic formalization of the seriation problem was made by Petrie
in 1899~\cite{petrie}, even if the term seriation was used before in
archaeology. The subject was later resumed by Breinerd and
Robinson~\cite{brainerd1951place,robinson1951method}, who also proposed a
practical method for its solution, and by
Kendall~\cite{kendall63,kendall69,kendall70}.
Nice reviews on seriation are given in~\cite{liiv2010},~\cite{ps05}, 
and~\cite{bl02}, where its application in stratigraphy is discussed, 
while~\cite{bb15,matharcheo} describe other applications of mathematics in
archaeology.

Given the variety of applications, some software packages have been developed in
the past to manipulate seriation data. Some of these packages have not undergo a
regular maintenance, and does not seem to be easily usable on modern computers,
like the Bonn Archaeological Software Package (BASP)
(\url{http://www.uni-koeln.de/~al001/}).
A software specifically designed for the seriation problem in bioinformatics
has been developed by Caraux and Pinloche~\cite{permutsoft}, and is available
for free download.

Other implementations of the spectral algorithm from~\cite{atkins1998spectral}
have been discussed in~\cite{fogel2014}, \cite{hahsler2008}, which describes an
R package available at \url{http://cran.r-project.org/web/packages/seriation/},
and~\cite{seminaroti2016}.
The paper~\cite{fogel2015} proposes an interesting method, based on quadratic
programming, aimed at treating ``noisy cases''.

In this paper we present a Matlab implementation of a spectral method for the
solution of the seriation problem which appeared in~\cite{atkins1998spectral},
based on the use of the Fiedler vector of the Laplacian associated to the
problem, and which describes the results in terms of a particular data
structure called a PQ-tree.
We further develop some numerical aspects of the algorithm, concerning the
detection of equal components in the Fiedler vector and the computation of the
eigensystem of the Laplacian associated to a large scale problem.
We also provide a parallel version of the method.
The package, named the \pqser toolbox, also defines a data structure to store
a PQ-tree and provides the Matlab functions to manipulate and visualize it.
Finally, we discuss the implications of the presence of a multiple Fiedler
value, an issue which has been disregarded up to now, and we illustrate some
numerical experiments.

The plan of the paper is the following. Section \ref{sec:mathback} reviews the
necessary mathematical background and sets up the terminology to be used in the
rest of the paper. Section~\ref{sec:pqtrees} describes the data structures
used to store the solutions of the seriation problem.
The spectral algorithm is discussed in Section~\ref{sec:seriation} and the
special case of a multiple Fiedler values is analyzed in
Section~\ref{sec:mult}.
Section~\ref{sec:numexp} reports some numerical results and
Section~\ref{sec:last} contains concluding remarks.

\section{Mathematical background}\label{sec:mathback}

With the aim of making this paper self-contained, we review some mathematical
concepts that will be used in the following.
We denote matrices by upper case roman letters and their elements by
lower case double indexed letters.

Let $G$ be a simple graph formed by $n$ nodes.
Each entry $f_{ij}$ of the adjacency matrix $F\in\R^{n\times n}$ associated to
$G$ is taken to be the weight of the edge connecting node $i$ to node $j$. 
If the two nodes are not connected, then $f_{ij}=0$.
A graph is unweighted if the weights are either 0 or 1.
The adjacency matrix is symmetric if and only if the graph is undirected.

The (unnormalized) \emph{graph Laplacian} of a symmetric, matrix 
$F\in\R^{n\times n}$ is the symmetric, positive semidefinite matrix 
$$
L=D-F,
$$
where $D=\diag(d_1,\ldots,d_n)$ is the \emph{degree matrix}, whose $i$th
diagonal element equals the sum of the weights of the edges starting from node
$i$ in the undirected network defined by $F$, that is, 
$d_i=\sum_{j=1}^n f_{ij}$.
In the case of an unweighted graph, $d_i$ is the number of nodes connected to
node $i$.

Setting $\bm{e}=[1,\dots,1]^T\in\R^n$, it is immediate to observe that
$$
L\bm{e} = (D-F)\bm{e}= \bm{0},
$$
where $\bm{0}\in\R^n$ is the zero vector.
Hence, $0$ is an eigenvalue of the graph Laplacian with eigenvector $\bm{e}$.

The Gershgorin circle theorem implies all the eigenvalues are non-negative, so
we order them as $\lambda_1=0\leq\lambda_2\leq\dots\leq\lambda_n$, with
corresponding eigenvectors $\bm{v}_1=\bm{e},\bm{v}_2,\dots, \bm{v}_n$.
The smallest eigenvalue of $L$ with associated eigenvector orthogonal to
$\bm{e}$ is called the \emph{Fiedler value}, or the \emph{algebraic
connectivity}, of the graph described by $F$.
The corresponding eigenvector is the \emph{Fiedler
vector}~\cite{fiedler1973algebraic,fiedler1975property,fiedler1989laplacian}.

Alternatively, the Fiedler value may be defined by
$$
\min_{\bm{x}^{T}\bm{e}=0,\ \bm{x}^{T}\bm{x}=1}\bm{x}^{T}L\bm{x}.
$$
Then, a Fiedler vector is any vector $\bm{x}$ that achieves the minimum.

From the Kirchhoff matrix-tree theorem it follows that the Fiedler value is
zero if and only if the graph is not connected~\cite{de2007old}; in particular,
the number of times 0 appears as an eigenvalue of the Laplacian is the number
of connected components of the graph.
So, if the considered adjacency matrix is irreducible, that is, if the
graph is connected, the Fiedler vector corresponds to the first non-zero
eigenvalue of the Laplacian matrix.

A \emph{bipartite graph} $G$ is a graph whose vertices can be divided into two
disjoint sets $U$ and $V$ containing $n$ and $m$ nodes, respectively, such
that every edge connects a node in $U$ to one in $V$.

In our archaeological metaphor, the nodes sets $U$ (units) and $V$
(types) represent the excavation sites and the found artifacts, respectively.
Then, as already outlined in the Introduction, 
the associated adjacency matrix $A$, of size $n\times m$, is obtained by
setting $a_{ij}=1$ if the unit $i$ contains objects of type $j$, and 0
otherwise. 
If the element $a_{ij}$ takes value different from 1, we consider it as a
weight indicating the number of objects of type $j$ contained in unit $i$, or
their percentage. In this case, we denote $A$ as the \emph{abundance matrix}.

The first mathematical definition of seriation was based on the construction of
a symmetric matrix $S$ known as \emph{similarity
matrix}~\cite{brainerd1951place,robinson1951method}, where $s_{ij}$ describes,
in some way, the likeness of the nodes $i,j\in U$. One possible definition is
through the product $S=AA^T$, being $A$ the adjacency matrix of the problem.
In this case, $s_{ij}$ equals the number of types shared between unit $i$ and
unit $j$. 
The largest value on each row is the diagonal element, which reports the
number of types associated to each unit. By permuting the rows and columns of
$S$ in order to cluster the largest values close to the main diagonal, one 
obtains a permutation of the corresponding rows of $A$ that places closer the
units similar in types.
It is worth noting that this operation of permuting rows and columns of $S$ is
not uniquely defined.

The \emph{Robinson method}~\cite{robinson1951method} is a statistical technique
based on a different similarity matrix. It is based on the concept that each
type of artifact used in a certain period eventually decreases in popularity
until it becomes forgotten. This method is probably the first documented
example of a practical procedure based on the use of the similarity matrix, so
its description is interesting in a historical perspective.

The method, starting from an abundance matrix $A \in \R^{n \times m}$ whose
entries are in percentage form (the sum of each row is 100), computes the
similarity matrix $S$ by a particular rule, leading to a symmetric matrix of
order $n$ with entries between $0$ (rows with no types in common) and $200$,
which corresponds to units containing exactly the same types.
Then, the method searches for a permutation matrix $P$ such that $PSP^T$ has
its largest entries as close as possible to the main diagonal. The same
permutation is applied to the rows of the data matrix $A$ to obtain a
chronological order for the archaeological units. Since, as already remarked,
the sequence can be read in both directions, external information must be used
to choose an orientation.

The procedure of finding a permutation matrix $P$ is not uniquely specified.
One way to deal with it is given by the so called \emph{Robinson's form}, which
places larger values close to the main diagonal, and lets off-diagonal entries
be nonincreasing moving away from the main diagonal.
More in detail, a symmetric matrix $S$ is in Robinson's form, or is an
R-matrix, if and only if
\begin{eqnarray}
s_{ij}\leqslant s_{ik}, \quad \text{if } j\leqslant k\leqslant i, 
\label{rmatrix1} \\
s_{ij}\geqslant s_{ik}, \quad \text{if } i\leqslant j\leqslant k.
\label{rmatrix2} 
\end{eqnarray}
A symmetric matrix is pre-$R$ if and only if there exists a simultaneous
permutation of its rows and columns which transforms it in Robinson's form, so
it corresponds to a well-posed ordering problem.
For other interesting references on R-matrices and their detection, 
see~\cite{chepoi1997,laurent2017lex,laurent2017,prea2014,seston2008}.


\section{PQ-trees}\label{sec:pqtrees}

A \emph{PQ-tree} is a data structure introduced by Booth and
Lueker~\cite{booth1976testing} to encode a family of permutations of a
set of elements, and solve problems connected to finding admissible permutations
according to specific rules.

A PQ-tree $T$ over a set $U = \{u_1,u_2,\dots,u_n\}$ is a rooted tree whose
leaves are elements of $U$ and whose internal (non-leaf) nodes are
distinguished as either P-nodes or Q-nodes. 
The only difference between them is the way in which their children are
treated: for Q-nodes only one order and its reverse are allowed, whereas in the
case of P-node all possible permutations of the children leaves are permitted.
The root of the tree can be either a P or a Q-node.

We will represent graphically a P-node by a circle, and a Q-node by a
rectangle. 
The leaves of $T$ will be displayed as triangles, and labeled by the elements
of $U$. The frontier of $T$ is one possible permutation of the elements of $U$,
obtained by reading the labels of the leaves from left to right.

We recall two definitions from~\cite{booth1976testing}.

\begin{definition}\label{def:proper}
A PQ-tree is \textit{proper} when the following conditions hold:
\begin{itemize}
\item[i)] every $u_{i}\in U$ appears precisely once as a leaf;
\item[ii)] every P-node has at least two children;
\item[iii)] every Q-node has at least three children. 
\end{itemize}
\end{definition}

As we observed above, the only difference between a P-node and a Q-node is the
treatment of their children, and in the case of exactly two children there is
no real distinction between a P-node and a Q-node.
This justifies the second and third conditions of Definition~\ref{def:proper}.

\begin{definition}\label{def:equiv}
Two PQ-trees are said to be \textit{equivalent} if one can be transformed into
the other by applying a sequence of the following two transformations:
\begin{itemize}
\item[i)] arbitrarily permute the children of a P-node;
\item[ii)] reverse the children of a Q-node.
\end{itemize} 
\end{definition}

A PQ-tree represents permutations of the elements of a set through
admissible reorderings of its leaves.
Each transformation in Definition~\ref{def:equiv} specifies an admissible
reordering of the nodes within a PQ-tree.
For example, a tree with a single P-node represents the equivalence
class of all permutations of the elements of $U$, while a tree with a single
Q-node represents both the left-to-right and right-to-left orderings of the
leaves.
A tree with a mixed P-node and Q-node structure represents the equivalence
class of a constrained permutation, where the exact structure of the tree
determines the constraints.
Figure~\ref{fig:pqtree} displays a PQ-tree and the admissible permutations
it represents.

\begin{figure}
\begin{minipage}{.52\textwidth}
\includegraphics[width=\textwidth]{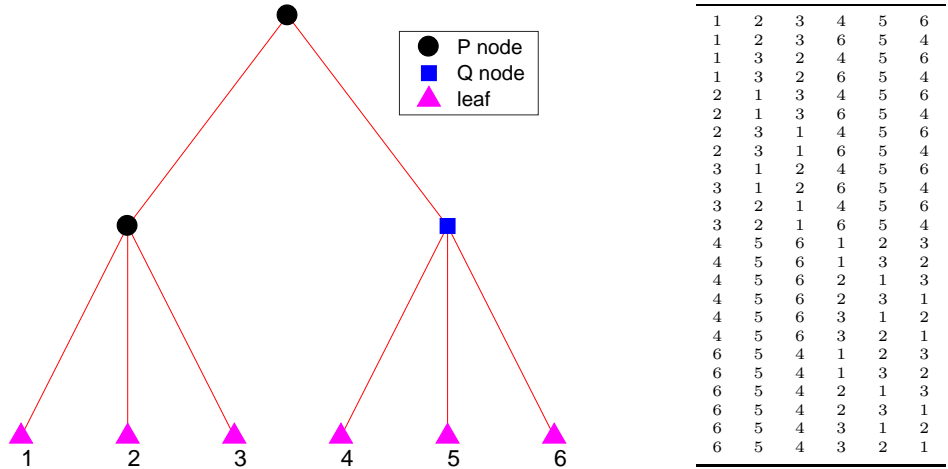}
\end{minipage}
\begin{minipage}{.45\textwidth}
\tiny
\begin{center}
\begin{tabular}{cccccc}
\hline\noalign{\smallskip}
1 & 2 & 3 & 4 & 5 & 6 \\
1 & 2 & 3 & 6 & 5 & 4 \\
1 & 3 & 2 & 4 & 5 & 6 \\
1 & 3 & 2 & 6 & 5 & 4 \\
2 & 1 & 3 & 4 & 5 & 6 \\
2 & 1 & 3 & 6 & 5 & 4 \\
2 & 3 & 1 & 4 & 5 & 6 \\
2 & 3 & 1 & 6 & 5 & 4 \\
3 & 1 & 2 & 4 & 5 & 6 \\
3 & 1 & 2 & 6 & 5 & 4 \\
3 & 2 & 1 & 4 & 5 & 6 \\
3 & 2 & 1 & 6 & 5 & 4 \\
4 & 5 & 6 & 1 & 2 & 3 \\
4 & 5 & 6 & 1 & 3 & 2 \\
4 & 5 & 6 & 2 & 1 & 3 \\
4 & 5 & 6 & 2 & 3 & 1 \\
4 & 5 & 6 & 3 & 1 & 2 \\
4 & 5 & 6 & 3 & 2 & 1 \\
6 & 5 & 4 & 1 & 2 & 3 \\
6 & 5 & 4 & 1 & 3 & 2 \\
6 & 5 & 4 & 2 & 1 & 3 \\
6 & 5 & 4 & 2 & 3 & 1 \\
6 & 5 & 4 & 3 & 1 & 2 \\
6 & 5 & 4 & 3 & 2 & 1 \\
\noalign{\smallskip}\hline
\end{tabular}
\end{center}
\end{minipage}
\caption{On the left, a PQ-tree over the set $U=\{1,\ldots,6\}$; on the right,
the 24 admissible permutations encoded in the tree.}
\label{fig:pqtree}
\end{figure}

The PQ-tree data structure has been exploited in a variety of applications,
from archaeology and chronology reconstruction~\cite{atkins1998spectral} to
molecular biology with DNA mapping and sequence
assembly~\cite{greenberg1995physical}.
The first problem to which it was applied is the consecutive ones property
(C1P) for matrices~\cite{booth1976testing}, mentioned in
Section~\ref{sec:intro}.

Given a pre-R matrix, the spectral algorithm from~\cite{atkins1998spectral},
that will be discussed in Section~\ref{sec:seriation}, constructs a PQ-tree
describing the set of all the permutations of rows and columns that lead to an
R-matrix.

\subsection{Implementation of PQ-trees}

The \pqser toolbox for Matlab is available as a compressed archive that can be downloaded from the authors webpages (see, e.g., 
\url{http://bugs.unica.it/~gppe/soft/}).
By uncompressing it, the directory \pqser will be created. 
It must be added to Matlab search path, either by the command \ttt{addpath} or
using the graphical interface menus.
The sub-directory \texttt{demo} contains a tutorial for the toolbox and the
scripts which were used to construct the examples reported in the paper.
The installation procedure and the toolbox content are described in detail in
the README.txt file, which can be found in the main directory.

In the \pqser toolbox, a PQ-tree $T$ is a \ttt{struct} variable (i.e., a
record) composed by two fields.
The first field, \ttt{T.type}, specifies the type of the node, i.e., P, Q,
or a leaf, in the case of a trivial tree.
The second field, \ttt{T.value}, is a vector which provides a list of PQ-trees,
recursively defined.
In the case of a leaf, this field contains the index of the unit it represents.

For example, the graph in Figure~\ref{fig:pqtree} was obtained by the following
piece of code
\begin{quote}
\footnotesize
\begin{verbatim}
v(1) = pnode([1 2 3]);	% create a P-node with three leaves
v(2) = qnode([4 5 6]);	% create a Q-node with three leaves
T = pnode(v);		% create a P node pointing to the previous two nodes
pqtreeplot(T)		% visualize the PQ-tree
\end{verbatim}
\end{quote}
the resulting data structure for the PQ-tree is
\begin{quote}
\footnotesize
\begin{verbatim}
T = 
  struct with fields:
     type: 'P'
    value: [1x2 struct]
\end{verbatim}
\end{quote}
and the permutations encoded in $T$ are computed by
\begin{quote}
\footnotesize
\begin{verbatim}
perms_matrix = pqtreeperms(T)
\end{verbatim}
\end{quote}
These instructions are contained in the script \texttt{graf1.m}, in the
\texttt{demo} sub-directory.

\begin{table}[htb]
\footnotesize
\centering
\begin{tabular}{ll}
\hline
\ttt{pnode}         & create a P-node \\
\ttt{qnode}         & create a Q-node \\
\ttt{lnode}         & create a leaf \\
\ttt{mnode}         & create an M-node \\
\ttt{pqtreeplot}    & plot a PQ-tree \\
\ttt{pqtreeNperm}   & number of admissible permutations in a PQ-tree \\
\ttt{pqtreeperms}   & extract all admissible permutations from a PQ-tree \\
\ttt{pqtree1perm}   & extract one admissible permutation from a PQ-tree \\
\ttt{pqtreegetnode} & extract a subtree from a PQ-tree \\
\ttt{pqtreenodes}   & converts a PQ-tree to Matlab \ttt{treeplot} format \\
\hline
\end{tabular}
\caption{Functions in the \pqser toolbox devoted to the manipulation of
PQ-trees.}
\label{tab:pqfuncs}
\end{table}

The functions intended for creating and manipulating a PQ-tree are listed in
Table~\ref{tab:pqfuncs}.
The function \ttt{mnode} creates an additional type of node, an M-node, which
is intended to deal with multiple Fiedler values; we will comment on it in
Section~\ref{sec:mult};
\ttt{pqtreegetnode} and \ttt{pqtreenodes} are utility functions for
\ttt{pqtreeplot}, they are not intended to be called directly by the user.
All the functions are documented via the usual Matlab \ttt{help} command, e.g.,
\begin{quote}
\footnotesize
\begin{verbatim}
help pnode
help pqtreeplot
\end{verbatim}
\end{quote}

As an example, we report in Algorithm~\ref{alg:pqtreeNperm} the structure of
\ttt{pqtreeNperm}, a function which returns the number $N$ of all the
permutations contained in the tree whose root $T$ is given in input.
In the particular case of a leaf, only one permutation is possible 
(line~\ref{l1}--\ref{l2}).
Otherwise, we consider the vector $\bm{c}$ of size $k$, containing the children
nodes of the root of T (line~\ref{l5}).
The algorithm calls itself recursively on each component of $\bm{c}$
(line~\ref{l8}).
In the case of a Q-node the number of permutations is doubled, because only one
ordering and its reverse are admissible, whereas for a P-node the number is
multiplied by the factorial of $k$, since in this case all the possible
permutations of the children are allowed.
The same procedure is applied to an M-node; see Section~\ref{sec:mult} for
details.

\begin{algorithm}[!ht]
\begin{algo}
\STATE \Function $N = \pqtreeNperm(T)$
\IF $T$ is a leaf
\label{l1}
\STATE $N =1$
\label{l2}
\ELSE 
	\STATE $\bm{c}=T\ttt{.value}$, $k=\length(\bm{c})$
\label{l5}
	\STATE $p=1$
	\FOR $i = 1,\dots,k$
		\STATE $p = p*\pqtreeNperm(c_i)$
\label{l8}
	\ENDFOR
	\IF $T$ is a Q-node
		\STATE $N = 2*p$
	\ELSE 
		\STATE $N = \factorial(k)*p$
	\ENDIF
\ENDIF	 
\end{algo}
\caption{Compute the number of admissible permutations in a PQ-tree.}
\label{alg:pqtreeNperm}
\end{algorithm}

The toolbox includes an interactive graphical tool for exploring a PQ-tree $T$.
After displaying $T$ by \ttt{pqtreeplot}, it is possible to extract a subtree
by clicking on one node with the left mouse button. In this case, the
corresponding subtree is extracted, it is plotted in a new figure, and it is
saved to the variable \ttt{PQsubtree} in the workspace.
This feature is particularly useful when analyzing a large PQ-tree.
The function \ttt{pqtreeplot} allows to set some attributes of the plot; see
the help page.

\section{A spectral algorithm for the seriation problem}\label{sec:seriation}

In this section we briefly review the spectral algorithm for the seriation
problem introduced in~\cite{atkins1998spectral}, and describe our
implementation.

Given the set of units $U=\{u_1,u_2,\dots,u_n\}$, we will write 
$i\preccurlyeq j$ if $u_i$ precedes $u_j$ in a chosen ordering.
In~\cite{atkins1998spectral}, the authors consider a symmetric bivariate
\emph{correlation function} $f$ reflecting the desire for units $i$ and $j$ to
be close to each other in the sought sequence.
The point is to find all index permutation vectors
$\pib=(\pi_1,\ldots,\pi_n)^T$ such that 
\begin{equation}\label{fperm}
\pi_i\preccurlyeq\pi_j\preccurlyeq\pi_k \quad \iff \quad 
f(\pi_i,\pi_j)\geq f(\pi_i,\pi_k) \quad \text{and} \quad
f(\pi_j,\pi_k)\geq f(\pi_i,\pi_k).
\end{equation}
It is natural to associate to such correlation function a real symmetric matrix
$F$, whose entries are defined by $f_{ij}=f(i,j)$.
This matrix plays exactly the role of the similarity matrix $S$ discussed in
Section~\ref{sec:mathback}, as the following theorem states.

\begin{theorem}\label{theo:fs}
A matrix $F$ is an R-matrix if and only if \eqref{fperm} holds.
\end{theorem}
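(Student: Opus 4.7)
The plan is to prove the equivalence by directly translating Robinson's conditions \eqref{rmatrix1}--\eqref{rmatrix2} into the correlation inequalities of \eqref{fperm}, taking the natural ordering of row/column indices as the ordering $\preccurlyeq$ (so one reads \eqref{fperm} as the statement that characterizes $F$ itself; equivalently, one applies the argument to $P^T F P$ after permuting by a generic $\pib$). The whole proof is essentially unpacking the two families of inequalities and using symmetry $f_{ij}=f_{ji}$, so I would not expect any deep step, but rather a careful case split.

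First I would handle the forward direction ($F$ R-matrix $\Rightarrow$ \eqref{fperm}). Fix three indices with $i \preccurlyeq j \preccurlyeq k$, i.e.\ $i\le j\le k$. The first required inequality $f_{ij}\ge f_{ik}$ is immediate from \eqref{rmatrix2} applied to row $i$ with $i\le j\le k$. For the second inequality $f_{jk}\ge f_{ik}$, I would use symmetry to rewrite it as $f_{kj}\ge f_{ki}$, and then apply \eqref{rmatrix1} to row $k$ with $i\le j\le k$ (i.e., the roles of ``$i,j,k$'' in \eqref{rmatrix1} become $k,i,j$), yielding exactly $f_{ki}\le f_{kj}$.

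Second, I would handle the backward direction. Assuming \eqref{fperm} for the identity ordering, condition \eqref{rmatrix2} is literally the first inequality in \eqref{fperm}. To recover \eqref{rmatrix1}, I would take any triple with $j\le k\le i$ and apply \eqref{fperm} to the relabelled triple $(j,k,i)$ playing the role of $(\pi_i,\pi_j,\pi_k)$: the second inequality gives $f_{ki}\ge f_{ji}$, which after using symmetry of $F$ is exactly \eqref{rmatrix1}. Both Robinson conditions are therefore produced.

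The main obstacle is the honest handling of the biconditional inside \eqref{fperm}: to get full logical equivalence one must also argue the reverse implication, that the inequalities in \eqref{fperm} can only be realized by an ordering compatible with $\pib$ (up to the reversal allowed on Q-nodes). I would argue this by contraposition: if $i\preccurlyeq j\preccurlyeq k$ fails, then one of the five alternative orderings of the triple holds, and applying the forward direction of the theorem to that ordering produces a pair of inequalities incompatible with $f_{ij}\ge f_{ik}$ and $f_{jk}\ge f_{ik}$ (strict when no ties occur). The tie case is the delicate part, and matches exactly the well-known ambiguity that the PQ-tree is designed to represent: when two correlations coincide, several orderings are admissible, so \eqref{fperm} is then satisfied by an equivalence class of permutations rather than a single one, which is consistent with the characterization of R-matrices modulo reversal.
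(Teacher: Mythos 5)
Your proof is correct and follows essentially the same route as the paper's: apply the permutation so the ordering becomes the natural one, identify the first inequality of \eqref{fperm} with \eqref{rmatrix2}, and recover \eqref{rmatrix1} from the second inequality via symmetry of $F$ and a cyclic relabelling of the indices. Your additional care about the internal biconditional and the tie case goes beyond the paper's own terse argument but does not change the approach.
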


\begin{proof}
Let us assume that the permutation $\pib$ which realizes \eqref{fperm} has
already been applied to the units. 
Then, since a permutation of the units corresponds to a simultaneous
permutation of the rows and columns of the matrix $F$, we obtain
$$
i\leq j\leq k \quad \iff \quad 
f_{ij}\geq f_{ik} \quad \text{and} \quad
f_{jk}\geq f_{ik}.
$$
The first inequality $f_{ij}\geq f_{ik}$ is exactly \eqref{rmatrix2}.
Keeping into account the symmetry of $F$ and cyclically permuting the indexes,
from the second inequality we get
$$
j\leq k\leq i \quad \iff \quad 
f_{ij}\leq f_{ik},
$$
which corresponds to \eqref{rmatrix1}.
\end{proof}

If a seriation data set is described by an adjacency (or abundance) matrix $A$,
we will set $F=AA^T$.
If $F$ is pre-$R$ (see Section~\ref{sec:mathback}), there exists a rows/columns
permutation that takes it in $R$-form. Unfortunately, this property cannot be
stated in advance, in general. This property can be ascertained, e.g., after
applying the algorithm discussed in this section; see below.

The authors approach in~\cite{atkins1998spectral}, see
also~\cite{estrada2010network}, is to consider the minimization of the
following penalty function 
$$
h(\bm{x}) = \frac{1}{2}\sum_{i,j=1}^{n} f_{ij}(x_i-x_j)^2, 
\quad \bm{x}\in \R^n,
$$ 
whose value is small for a vector $\bm{x}$ such that each pair $(i,j)$ of
highly correlated units is associated to components $x_i$ and $x_j$ with close
values.
Once the minimizing vector $\bm{x}_{\min}$ is computed, it is sorted in
either nonincreasing or nondecreasing value order, yielding
$\bm{x}_\pib=(x_{\pi_1},\ldots,x_{\pi_n})^T$.
The permutation of the units $\pib$ realizes \eqref{fperm}.

Note that $h$ does not have a unique minimizer, since its value does not change
if a constant is added to each of the components $x_i$ of the vector
$\bm{x}$.
In order to ensure uniqueness and to rule out the trivial solution, it is
necessary to impose two suitable constraints on the components of the vector
$\mathbf{x}$.
The resulting minimization problem is:
$$
\begin{aligned}
&\text{minimize} & &
h(\bm{x}) = \frac{1}{2}\sum_{i,j=1}^{n} f_{ij}(x_i-x_j)^2 \\
&\text{subject to} & & \sum_i x_i = 0 \quad \text{and} \quad \sum_i x_i^2 = 1.
\end{aligned}
$$

The solution to this approximated problem may be obtained from the Fiedler
vector of the Laplacian $L$ of the correlation matrix $F$. 
Letting $D = \diag(d_i)$ be the degree matrix, $d_i=\sum_{j=1}^n f_{ij}$, it is
immediate to observe that
$$
h(\bm{x}) = \frac{1}{2}\sum_{i,j=1}^{n} f_{ij}(x_i^2+x_j^2-2x_ix_j)
= \bm{x}^T D \bm{x} - \bm{x}^T F \bm{x}.
$$ 
This shows that the previous minimization problem can be rewritten as 
\begin{eqnarray*}
\min_{\|\mathbf{x}\|=1,\ \bm{x}^T\bm{e} = 0}
\mathbf{x}^TL\mathbf{x}
\end{eqnarray*}
where $L=D-F$.
The constraints require $\mathbf{x}$ to be a unit vector orthogonal to
$\mathbf{e}$.
Being $L$ symmetric, all the eigenvectors except $\bm{e}$ satisfy the
constraints.
Consequently, a Fiedler vector is a solution to the constrained minimization
problem.

In fact, Theorem~3.2 from~\cite{atkins1998spectral} proves that an R-matrix has
a monotone Fiedler vector, while Theorem~3.3, under suitable assumptions,
implies that a reordering of the Fiedler vector takes a pre-R matrix to R-form.
This confirms that the problem is well posed only when $F$ is pre-R.
Nevertheless, real data sets may be inconsistent, in the sense that do not
necessarily lead to pre-R similarity matrices.
In such cases, it may be useful to construct an approximate solution to the
seriation problem, and sorting the entries of the Fiedler vector generates an
ordering that tries to keep highly correlated elements close to each other.
This is relevant because techniques based on Fiedler vectors are being used for
the solution of different sequencing
problems~\cite{barnard1995spectral,greenberg1995physical,higham2007,juvan1992optimal}. 
In particular, they are employed in complex network analysis, e.g., for
community detection and partitioning of graphs~\cite{estrada2012,estrada2015}.

The algorithm proposed in~\cite{atkins1998spectral} is based upon the above
idea, and uses a PQ-tree to store the permutations of the units that produce a
solution to the seriation problem; its implementation is described in
Algorithm~\ref{alg:spectrsort}.

\begin{algorithm}[!ht]
\begin{algo}
\STATE \Function $T = \spectrsort(F,U)$
\STATE $n=$ row size of $F$
\STATE $\alpha = \min_{i,j} f_{i,j}$, \If $\alpha\neq 0$, 
	$\bm{e}=(1,\ldots,1)^T$, $F=F-\alpha \bm{e}\bm{e}^{T}$, \End
\label{line2}
\STATE call $\getconcomp$ to construct the connected components 
$\{F_1,\dots,F_k\}$ of $F$
\label{line3}
\STATE \phantom{XXX} and the corresponding index sets $U=\{U_1,\dots,U_k\}$ 
\label{line4}
\IF {$k>1$}
	\FOR {$j=1,\dots,k$}
\label{line6}
		\STATE $v(j) = \spectrsort(F_j,U_j)$
	\ENDFOR
	\STATE $T = \pnode(v)$
\label{line9}
\ELSE 
	\IF {$n=1$}
\label{line11}
		\STATE $T=\lnode(U)$
	\ELSEIF {$n=2$}
		\STATE $T=\pnode(U)$
	\ELSE
\label{line15}
		\STATE $L=$ Laplacian matrix of $F$
\label{line16}
		\STATE compute (part of) the eigenvalues and eigenvectors of $L$
\label{line17}
		\STATE determine multiplicity $n_F$ of the Fiedler value
			according to a tolerance $\tau$
		\IF {$n_F = 1$}
			\STATE $\bm{x}=$ sorted Fiedler vector
			\STATE $t$ number of distinct values in $\bm{x}$
				according to a tolerance $\tau$
\label{line19}
			\FOR {$j=1,\dots,t$}
				\STATE $u_j$ indices of elements in $\bm{x}$ 
					with value $x_j$
				\IF $u_j$ has just one element
					\STATE $v_j=\lnode(u_j)$
\label{line26}
				\ELSE 
					\STATE $v(j) = \spectrsort(F(u_j,u_j),U(u_j,u_j))$
\label{line28}
				\ENDIF
			\ENDFOR
			\STATE $T = \qnode(v)$
		\ELSE
			\STATE $T=\mnode(U)$
\label{line33}
		\ENDIF
	\ENDIF
\ENDIF
\end{algo}
\caption{Spectral sort algorithm.}
\label{alg:spectrsort}
\end{algorithm}

The algorithm starts translating all the entries of the correlation matrix so
that the smallest is 0 , i.e., 
\begin{equation}\label{trans}
\tilde{F}=F-\alpha \bm{e}\bm{e}^{T}, \qquad \alpha = \min_{i,j} f_{ij};
\end{equation}
see line~\ref{line2} of Algorithm~\ref{alg:spectrsort}.
This is justified by the fact that $F$ and $\tilde{F}$ have the same Fiedler
vectors and that if $F$ is an irreducible R-matrix such translation ensures
that the Fiedler value is a simple eigenvalue of
$L$~\cite[Lemma~4.1~and~Theorem~4.6]{atkins1998spectral}.
Our software allows the user to disable this procedure (see
Table~\ref{tab:opts} below)
as he may decide to
suitably preprocess the similarity matrix in order to reduce the
computational load.
Indeed, the translation procedure is repeated each time the algorithm calls
itself recursively.

\begin{algorithm}[!ht]
\begin{algo}
\STATE \Function $U = \getconcomp(F)$
\STATE preallocate the cell-array $U$, $chlist=$empty vector
\STATE $root=$\{node 1\}, $list=root$, $n=$row size of $F$
\STATE $i=0$, $flag=true$ (logical variable)
\WHILE $flag$
\STATE $i =i+1$
\STATE $list =\graphvisit(root,list)$
\STATE $U\{i\}=list$
\STATE update $chlist$ adding the nodes in $list$ and sort the vector
\STATE $flag=true$ if the number of elements in $chlist$ is different from $n$
\STATE \phantom{XXX} otherwise $flag=false$
\IF $flag$
	\STATE choose the $root$ for a new connected component
	\IF there are no connected components left
		\STATE exit
	\ENDIF
	\STATE $list=root$
\ENDIF
\ENDWHILE
\end{algo}
\caption{Detect the connected components of a graph.}
\label{alg:getconcomp}
\end{algorithm}

If the matrix $F$ is reducible, then the seriation problem can be
decoupled~\cite[Lemma~4.2]{atkins1998spectral}.
Lines~\ref{line3}--\ref{line4} of the algorithm detect the irreducible blocks
of the correlation matrix by using the function \textsf{getconcomp.m}, which
also identifies the corresponding index sets.
The function, described in Algorithm~\ref{alg:getconcomp}, constructs a
\emph{cell array} containing the indices which identify each connected
component of a graph.
It calls the function \textsf{graphvisit.m}, which visits a graph starting from
a chosen node; see Algorithm~\ref{alg:graphvisit}.
Note that these two functions, in order to reduce the stack consumption due to
recursion, use a global variable to store the correlation matrix.

\begin{algorithm}[!ht]
\begin{algo}
\STATE \Function $list = \graphvisit(root,list)$
\STATE construct the list $l$ of the indices of the nodes connected to the root
\STATE initialize an empty list $nlist$
\STATE find the elements of $l$ which are not in $list$
\STATE add the new elements to $list$ and to $nlist$
\IF $nlist$ is not empty
	\STATE sort $list$
	\FOR each node $i$ in $nlist$
		\STATE $list=\graphvisit(nlist(i),list)$
	\ENDFOR
\ENDIF
\end{algo}
\caption{Visit a graph starting from a node.}
\label{alg:graphvisit}
\end{algorithm}

If more than one connected component is found, then the function calls itself
on each component, and stores the returned lists of nodes as children of
a P-node (lines \ref{line6}--\ref{line9}).
If the matrix is irreducible, the dimension $n$ of the matrix is
considered (lines~\ref{line11}--\ref{line15}).
The cases $n=1,2$ are trivial.
If $n>2$, the Laplacian matrix $L$ is computed, as well as the Fiedler value
and vector (lines~\ref{line16}--\ref{line17}). 
Depending on the matrix being ``small'' or ``large'' different algorithms are
used.
For a small scale problem the full spectral decomposition of the Laplacian is
computed by the \ttt{eig} function of Matlab.
For a large scale problem only a small subset of the eigenvalues and
eigenvectors are evaluated using the \ttt{eigs} function, which is based on a
Krylov space projection method.
The \pqser toolbox computes by default the eigenpairs corresponding to the
three eigenvalues of smallest magnitude, since they are sufficient to
understand if the Fiedler value is simple or multiple, but the default value
can be modified.
The choice between the two approaches is automatically performed and it may be
influenced by the user; see Table~\ref{tab:opts} in Section~\ref{sec:implser}.

Then, the algorithm determines the multiplicity of the Fiedler value according
to a given tolerance.
If the Fiedler value is a simple eigenvalue of $L$, the algorithm sorts the
elements of the current list according to the reordering of the Fiedler vector
and stores them as the children of a Q-node.
If two or more values of the Fiedler vector are repeated the function invokes
itself recursively (line~\ref{line28}), in accordance
with~\cite[Theorem~4.7]{atkins1998spectral}; on the contrary, the corresponding
node becomes a leaf (line~\ref{line26}).
In our implementation we introduce a tolerance $\tau$ to distinguish ``equal''
and ``different'' numbers: $a$ and $b$ are considered ``equal'' if
$|a-b|<\tau$. The default value for $\tau$ is $10^{-8}$.

In the case of a multiple Fiedler value, the algorithm conventionally
constructs an ``M-node'' (line~\ref{line33}).
This new type of node has been introduced in order to flag this particular
situation, which will be further discussed in Section~\ref{sec:mult}.

Algorithm~\ref{alg:spectrsort} produces a PQ-tree whether $F$ is a pre-R matrix
or not.
If all the Fiedler values computed are simple, the starting matrix is pre-R and
any permutation encoded in the PQ-tree will take it to R-form.
In the presence of a multiple Fiedler vector the problem is not well posed and
an approximate solution is computed.

The number $N$ of all the admissible permutations generated by the algorithm
can be obtained by counting all the admissible boundaries of the tree.
In the case of a PQ-tree consisting of a single Q-node $N$ is equal to 2,
because only the left-to-right order of the children leaves and its reverse
are possible. 
For a single P-node, the number of all the permutations is the factorial of the
number of the children.
An M-node is temporarily treated as a P-node, although we experimentally
observed that not all the permutations are admissible; this aspect is discussed
in Section~\ref{sec:mult}.

\subsection{Implementation of spectral seriation}\label{sec:implser}

The functions included in the \pqser toolbox are listed in
Table~\ref{tab:serfuncs}.
Besides the function \ttt{spectrsort}, which implements 
Algorithm~\ref{alg:spectrsort}, there is a parallel version of the same method,
\ttt{pspectrsort}, which distributes the \ttt{for} loop at
line~\ref{line6} of the algorithm among the available processing units.
In order to execute the function \ttt{pspectrsort}, the Parallel Computing
Toolbox must be present in the current Matlab installation.

\begin{table}[htb]
\footnotesize
\centering
\begin{tabular}{ll}
\hline
\ttt{spectrsort}   & spectral sort for the seriation problem \\
\ttt{pspectrsort}  & parallel version of \ttt{spectrsort} \\
\ttt{fiedvecs}     & compute the Fiedler vectors and values of a Laplacian \\
\ttt{getconcomp}   & determine the connected components of a graph \\
\ttt{graphvisit}   & visit a graph starting from a node \\
\ttt{distinct}     & sort and level the elements of a vector \\
\ttt{lapl}         & construct the graph Laplacian of a matrix \\
\ttt{testmatr}     & test matrices for PQser \\
\hline
\end{tabular}
\caption{Functions in the \pqser toolbox devoted to the solution of the
seriation problem.}
\label{tab:serfuncs}
\end{table}

The function \ttt{testmatr} allows one to create some simple test problems.
The remaining functions of Table~\ref{tab:serfuncs} are not likely to be used
in the common use of the toolbox. 
They are made available to the expert user, who may decide to call them
directly or to modify their content.

\begin{table}[htb]
\footnotesize
\centering
\begin{tabular}{ll}
\hline
\ttt{tau} & tolerance used to distinguish between ``equal'' and ``different'' \\
	  & values (\ttt{spectrsort} and \ttt{fiedvecs}, def. $10^{-8}$) \\
\ttt{translate} & applies translation \eqref{trans} (\ttt{spectrsort}, def. 1) \\
\ttt{lrg} & used to select small scale or large scale algorithm (\ttt{fiedvecs}, \\
	& true if the input matrix is sparse) \\
\ttt{nlarge} & if matrix size is below this value, the small scale algorithm \\
	     & is used (\ttt{fiedvecs}, def. 1000) \\
\ttt{neig} & number of eigenpairs to be computed when the large scale \\
	   & algorithm is used (\ttt{fiedvecs}, def. 3) \\
\ttt{maxncomp} & maximum number of connected components (\ttt{getconcomp}, def. 100) \\
\ttt{bw} & half bandwidth of test matrix (\ttt{testmatr}, type 2 example, def. 2) \\
\ttt{spar} & construct a sparse test matrix (\ttt{testmatr}, type 2 example, def. 1) \\
\hline
\end{tabular}
\caption{Tuning parameters for the \pqser toolbox; the functions affected are
reported in parentheses, together with the default value of each parameter.}
\label{tab:opts}
\end{table}

The toolbox has some tuning parameters, which are set to a default value, but
can be modified by the user. This can be done by passing to a function, as an
optional argument, a variable of type \ttt{struct} with fields chosen among the
ones listed in Table~\ref{tab:opts}.
For example:
\begin{quote}
\footnotesize
\begin{verbatim}
opts.translate = 0;
T = spectrsort(F,opts);
\end{verbatim}
\end{quote}
applies Algorithm~\ref{alg:spectrsort} to a similarity matrix $F$ omitting
the translation process described in \eqref{trans}.

To illustrate the use of the toolbox, we consider a similarity matrix $R$
satisfying the Robinson criterion 
$$
R = \begin{bmatrix}
200 & 150 & 120 & 80 & 40 & 0 & 0 & 0 & 0 & 0\\
150 & 200 & 160 & 120 & 80 & 40 & 0 & 0 & 0 & 0\\
120 & 160 & 200 & 160 & 120 & 80 & 40 & 0 & 0 & 0\\
80 & 120 & 160 & 200 & 160 & 120 & 80 & 40 & 0 & 0\\
40 & 80 & 120 & 160 & 200 & 160 & 120 & 80 & 40 & 0\\
0 & 40 & 80 & 120 & 160 & 200 & 160 & 120 & 80 & 40\\
0 & 0 & 40 & 80 & 120 & 160 & 200 & 160 & 120 & 80\\
0 & 0 & 0 & 40 & 80 & 120 & 160 & 200 & 160 & 120\\
0 & 0 & 0 & 0 & 40 & 80 & 120 & 160 & 200 & 150\\
0 & 0 & 0 & 0 & 0 & 40 & 80 & 120 & 150 & 200
\end{bmatrix}
$$
and the pre-R matrix obtained by applying to the rows and columns of $R$ a
random permutation 
$$
F = \begin{bmatrix} 
200 & 0 & 0 & 150 & 120 & 0 & 160 & 40 & 0 & 80\\
0 & 200 & 150 & 0 & 0 & 120 & 0 & 80 & 160 & 40\\ 
0 & 150 & 200 & 0 & 0 & 80 & 0 & 40 & 120 & 0\\ 
150 & 0 & 0 & 200 & 80 & 0 & 120 & 0 & 0 & 40\\ 
120 & 0 & 0 & 80 & 200 & 80 & 160 & 120 & 40 & 160\\ 
0 & 120 & 80 & 0 & 80 & 200 & 40 & 160 & 160 & 120\\ 
160 & 0 & 0 & 120 & 160 & 40 & 200 & 80 & 0 & 120\\ 
40 & 80 & 40 & 0 & 120 & 160 & 80 & 200 & 120 & 160\\ 
0 & 160 & 120 & 0 & 40 & 160 & 0 & 120 & 200 & 80\\ 
80 & 40 & 0 & 40 & 160 & 120 & 120 & 160 & 80 & 200 
\end{bmatrix}.
$$

The PQ-tree $T$ containing the solution of the reordering problem is
constructed by calling the function \ttt{spectrsort}, which returns the
resulting data structure:
\begin{quote}
\footnotesize
\begin{verbatim}
T = spectrsort(F,opts)
T = 
  struct with fields:
     type: 'Q'
    value: [1x10 struct]
\end{verbatim}
\end{quote}
Using the function \ttt{pqtreeplot} 
\begin{quote}
\footnotesize
\begin{verbatim}
pqtreeplot(T)
\end{verbatim}
\end{quote}
we obtain the representation of the PQ-tree displayed in Figure~\ref{pqtree2}.

\begin{figure}
\begin{center}
\includegraphics[width=.52\textwidth]{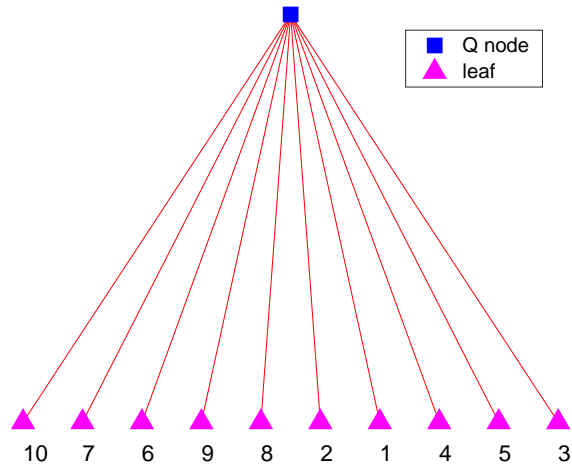}
\caption{A PQ-tree corresponding to a pre-R matrix of dimension 10.}
\label{pqtree2}
\end{center}
\end{figure}

In this particular case, the PQ-tree $T$ consists of just a Q-node as a root,
so only two permutations of the leaves are allowed.
They can be extracted from the tree using the function \ttt{pqtreeperms}, whose
output is
\begin{quote}
\footnotesize
\begin{verbatim}
perms_matrix = pqtreeperms(T)
perms_matrix =
     4     1     7     5    10     8     6     9     2     3
     3     2     9     6     8    10     5     7     1     4
\end{verbatim}
\end{quote}   
In some occasions, a PQ-tree may contain a very large number of permutations.
In such cases, the function \ttt{pqtree1perm} extracts just one of the
permutations, in order to apply it to the rows and columns of the matrix $F$:
\begin{quote}
\footnotesize
\begin{verbatim}
seq = pqtree1perm(T);
AR = F(seq,seq);
\end{verbatim}
\end{quote}   
Since $F$ is pre-R, we clearly reconstruct the starting similarity matrix $R$.

This experiment is contained in the script \texttt{graf2.m}, in the
\texttt{demo} sub-directory.
The script \texttt{tutorial.m}, in the same directory, illustrates the use of
the toolbox on other numerical examples.

\section{The case of a multiple Fiedler value}\label{sec:mult}

In this section we discuss the case where the Fiedler value is a multiple root
of the characteristic polynomial of the Laplacian $L$.
When this happens, the eigenspace corresponding to the smallest nonzero
eigenvalue of $L$ has dimension larger than one, so there is no uniqueness in
the choice of the Fiedler vector.

We conjecture that sorting the entries of a Fiedler vector, that is, of any
vector in the eigenspace of the Fiedler value, does not necessarily lead to all
possible indexes permutations, i.e., the factorial of the number $n$ of units.
We observed that there may be some constraints that limit the number of
permutations deriving from the Fiedler vector, and this number does not appear
to be related to the multiplicity of the Fiedler value by a simple formula.
We will illustrate this issue by a numerical experiment.

As we did not find any reference to this problem in the literature, we plan
to study it in a subsequent paper.
This is the reason why the \pqser toolbox conventionally associates an
\emph{M-node}, to the presence of a multiple Fiedler value.
In the software, the new type of node is temporarily treated as a P-node.
This leaves the possibility to implement the correct treatment of the case,
once the problem will be understood.

\begin{figure}[hbt]
\begin{center}
\begin{tikzpicture}[->,>=stealth',auto,node distance=1cm,
	on grid,semithick, every state/.style={fill=cyan!20!white,draw=none,
	circular drop shadow,text=black,inner sep=0pt}]
	\tiny
\node[state](A){1};
\node[state](B)[below=of A]{2};
\node[state](C)[below=of B]{3};
\node[state](D)[below=of C]{4};
\node[state](E)[below=of D]{5};
\node[state,fill=red!20!white,node distance=4cm](L)[right=of A]{1};
\node[state,fill=red!20!white,node distance=4cm](M)[right=of B]{2};
\node[state,fill=red!20!white,node distance=4cm](N)[right=of C]{3};
\node[state,fill=red!20!white,node distance=4cm](O)[right=of D]{4};
\node[state,fill=red!20!white,node distance=4cm](P)[right=of E]{5};
\path (A) edge (L);
\path (A) edge (M);
\path (B) edge (M);
\path (B) edge (N);
\path (C) edge (N);
\path (C) edge (O);
\path (D) edge (O);
\path (D) edge (P);
\path (E) edge (P);
\path (E) edge (L);
\end{tikzpicture}
\caption{The \emph{cycle} seriation problem; the units are on the left, the
types on the right.}
\label{fig:cycle}
\end{center}
\end{figure}
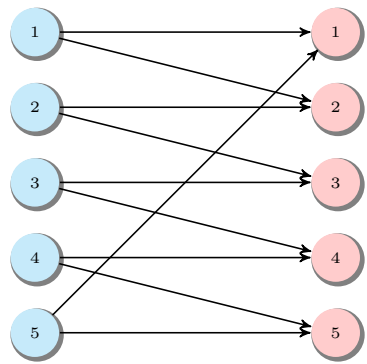

Here we present a simple example to justify our conjecture.
Let us consider the seriation problem described by the bipartite graph depicted
in Figure~\ref{fig:cycle}.
The nodes on the left are the units, e.g., the excavation sites; on the right
there are the types, which may be seen as the archeological findings. The
relationships between units and types are represented by edges connecting the
nodes.

The problem is clearly unsolvable, as the associated graph describes a
\emph{cycle}: each unit is related to surrounding units by a connection to a
common type, and the two extremal units are related to each other in the same
way.

At the same time, not all the units permutations are admissible.
For example, one may argue that the permutation $\pib_1=(3,4,5,1,2)^T$ should
be considered partially feasible, as it breaks only one of the constraints
contained in the bipartite graph, while the ordering $\pib_2=(1,4,2,5,3)^T$ has
nothing to do with the problem considered.

The adjacency matrix associated to the graph in Figure~\ref{fig:cycle} is the
following
\begin{equation}\label{incidCycle}
A = \begin{bmatrix}
1 & 1 & 0 & 0 & 0\\
0 & 1 & 1 & 0 & 0\\
0 & 0 & 1 & 1 & 0\\
0 & 0 & 0 & 1 & 1\\
1 & 0 & 0 & 0 & 1
\end{bmatrix}.
\end{equation}
We can associate to \eqref{incidCycle} the similarity matrix
\begin{equation}\label{cycleF}
F = AA^T = \begin{bmatrix}
2 & 1 & 0 & 0 & 1\\
1 & 2 & 1 & 0 & 0\\
0 & 1 & 2 & 1 & 0\\
0 & 0 & 1 & 2 & 1\\
1 & 0 & 0 & 1 & 2
\end{bmatrix},
\end{equation}
whose Laplacian is
$$
L = D - F = \begin{bmatrix}
2 & -1 & 0 & 0 & -1\\
-1 & 2 & -1 & 0 & 0\\
0 & -1 & 2 & -1 & 0\\
0 & 0 & -1 & 2 & -1\\
-1 & 0 & 0 & -1 & 2
\end{bmatrix}.
$$

The matrix $L$ is circulant, that is, it is fully specified by its first
column, while the other columns are cyclic permutations of the first one
with an offset equal to the column index. 
A complete treatment of circulant matrices can be found
in~\cite{davis1979circulant}, while~\cite{redivo2012smt} implements a Matlab
class for optimized circulant matrix computations.

One of the basic properties of circulant matrices is that their spectrum is
analytically known. In particular, the eigenvalues of $L$ are given by
$$
\{\widehat{L}(1), \widehat{L}(\omega), \widehat{L}(\omega^2),
\widehat{L}(\omega^3), \widehat{L}(\omega^4) \},
$$
where $\widehat{L}(\zeta)$ is the discrete Fourier transform of the first
column of $L$
$$
\hat{L}(\zeta) = 2 -\zeta ^{-1}-\zeta ^{-4},
$$
and $\omega = \e^{\frac{2\pi\ii}{5}}$ is the minimal phase $5$th root of
unity; see~\cite{davis1979circulant}.
A simple computation shows that
$$
\widehat{L}(1)=0, \quad
\widehat{L}(\omega)=\widehat{L}(\omega^4)=2-2\cos\frac{2\pi}{5}, \quad
\widehat{L}(\omega^2)=\widehat{L}(\omega^3)=2-2\cos\frac{4\pi}{5}, 
$$
so that the Fiedler value $\widehat{L}(\omega)$ has multiplicity 2.

To explore this situation we performed the following numerical experiment.
We considered 10000 random linear combinations of an orthonormal basis for the
eigenspace corresponding to the Fiedler value. This produces a set of random
vectors, belonging to a plane immersed in $\R^5$, which can all be considered
as legitimate ``Fiedler vectors''.

Each vector was sorted, and the corresponding permutations of indexes were
stored in the columns of a matrix.
In the end, all the repeated permutations were removed.
We obtained 10 permutations, reported in the columns of the following matrix
\begin{equation}\label{permat}
\begin{bmatrix}
3 & 2 & 5 & 3 & 2 & 4 & 4 & 1 & 5 & 1 \\
2 & 1 & 4 & 4 & 3 & 3 & 5 & 5 & 1 & 2 \\
4 & 3 & 1 & 2 & 1 & 5 & 3 & 2 & 4 & 5 \\
1 & 5 & 3 & 5 & 4 & 2 & 1 & 4 & 2 & 3 \\
5 & 4 & 2 & 1 & 5 & 1 & 2 & 3 & 3 & 4
\end{bmatrix}.
\end{equation}
They are much less than the $5!=120$ possible permutations, and reduce to 5 if
we remove the columns which are the reverse of another column.
This confirms our conjecture: when a Fiedler value is multiple some constraints
are imposed on the admissible permutations of the units.

It is relevant to notice that matrix \eqref{permat} does not contain the cyclic
permutations of the units depicted in Figure~\ref{fig:cycle}.
Indeed, the spectral algorithm aims at moving the nonzero components close to
the main diagonal, and this contrasts with the presence of nonzeros in the
elements $f_{51}$ and $f_{15}$ of matrix \eqref{cycleF}.
All permutations contained in \eqref{permat}, when applied to the similarity
matrix $F$, produce the same matrix
$$
\tilde{F} = \begin{bmatrix}
2 & 1 & 1 & 0 & 0 \\
1 & 2 & 0 & 1 & 0 \\
1 & 0 & 2 & 0 & 1 \\
0 & 1 & 0 & 2 & 1 \\
0 & 0 & 1 & 1 & 2
\end{bmatrix},
$$
which exhibits a smaller bandwidth than \eqref{cycleF}.

This experiment can be reproduced by executing the script \texttt{mfiedval.m},
which is found in the \texttt{demo} sub-directory.

\section{Numerical experiments}\label{sec:numexp}

In this section we illustrate the application of the \pqser toolbox to some
numerical examples.
The experiments can be repeated by running the related Matlab scripts located
in the \texttt{demo} sub-directory of the toolbox.

\begin{figure}[hbt]
\begin{center}
\includegraphics[width=.30\textwidth]{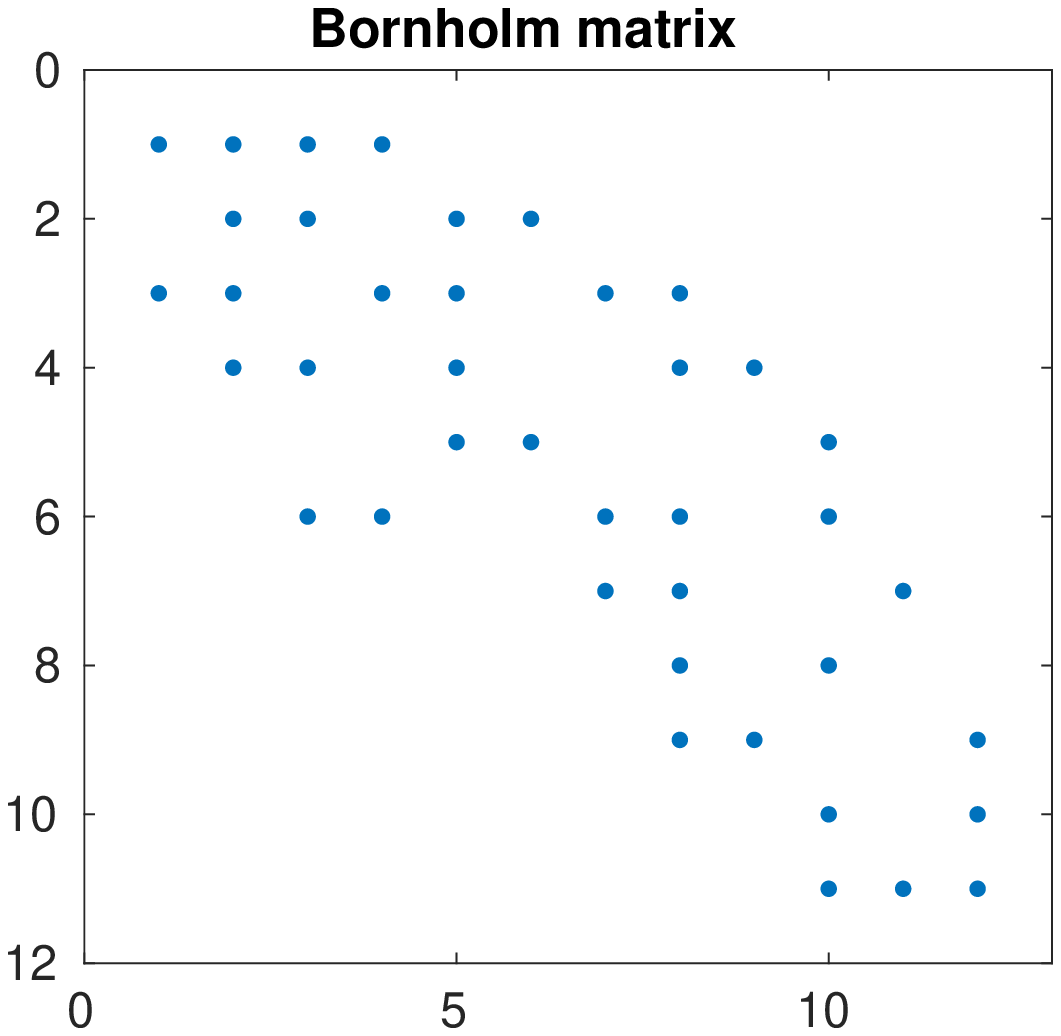}\hfil 
\includegraphics[width=.30\textwidth]{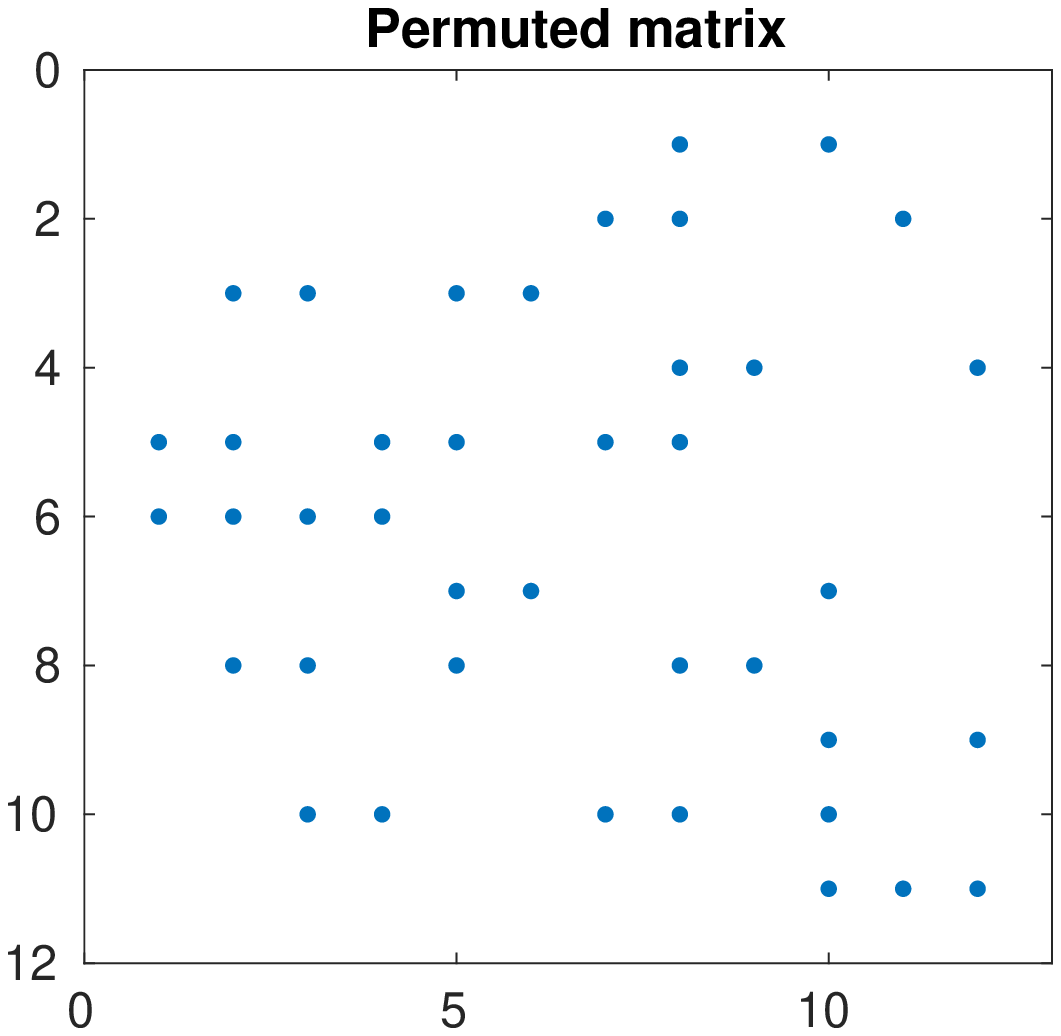}\hfil 
\includegraphics[width=.30\textwidth]{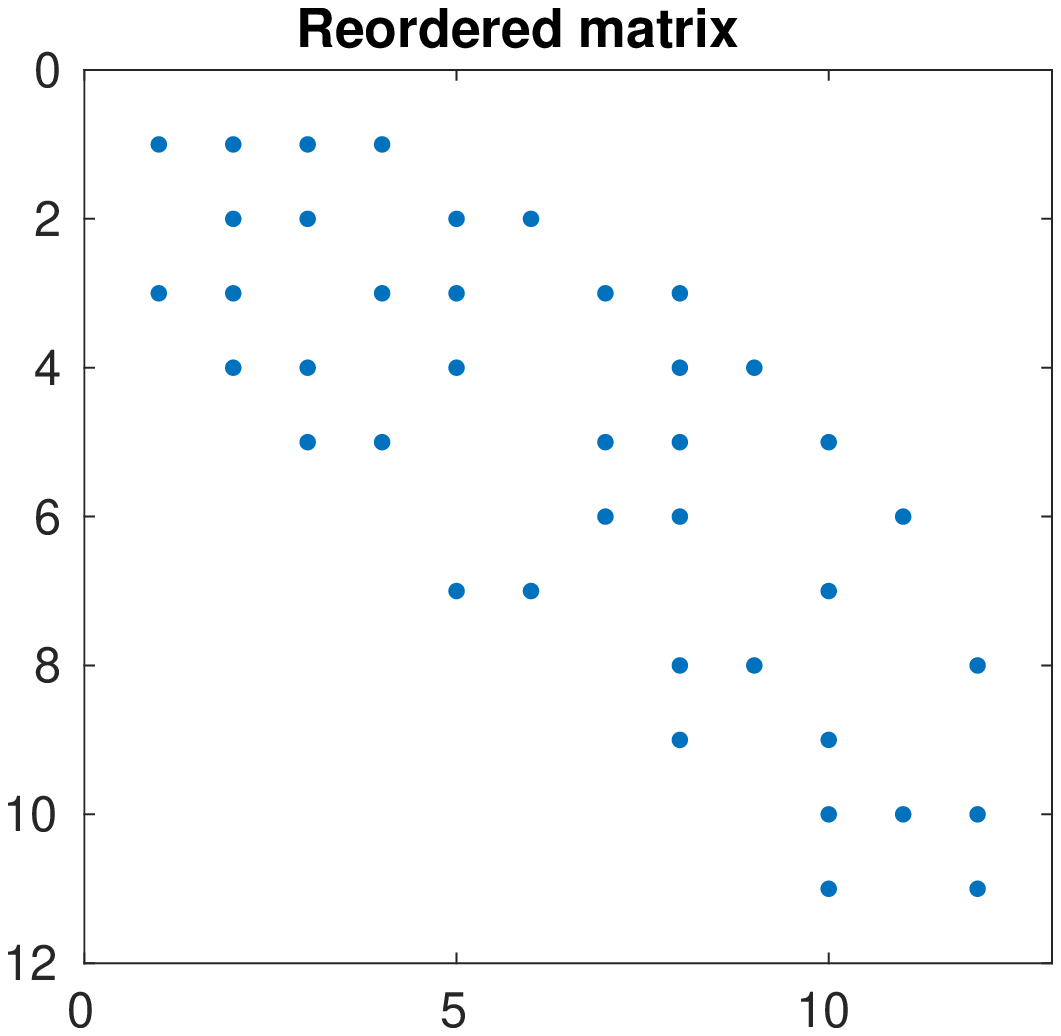}
\caption{Processing of the Bornholm data set: the spy plot on the left shows
the original matrix, a permuted version is reported in the central graph, on
the right we display the matrix reordered by the spectral algorithm.}
\label{bornholm}
\end{center}
\end{figure}

The first example is the numerical processing of the Bornholm data set,
presented in Table~\ref{tab:adjacency}.
We randomly permute the rows of the adjacency matrix and apply the
spectral algorithm to the similarity matrix associated to the permuted matrix. 
The resulting PQ-tree contains just a Q-node, so there is only one solution
(actually, this is a proof that the matrix is pre-R) which we use to reorder
the permuted matrix.
The computational code is contained in the file \ttt{exper1.m}.

Figure~\ref{bornholm} reports the spy plots which represent the nonzero
entries of the initial matrix, its permuted version, and the final reordering.
It is immediate to observe that the lower band of the reordered matrix is
slightly narrower than the initial matrix, showing that the spectral algorithm
was able to improve the results obtained empirically by archaeologists.

The second example concerns the comparison of the spectral algorithm with its
parallel version in the solution of a large scale problem.
The experiments were performed on a dual Xeon CPU E5-2620 system (12 cores),
running the Debian GNU/Linux operating system and Matlab 9.2.

The function \ttt{testmatr} of the toolbox allows the user to create a block
diagonal matrix, formed by $m$ banded blocks whose size is chosen using a
second input parameter.
The matrix is randomly permuted in order to hide its reducible structure.

\begin{figure}[hbt]
\begin{center}
\includegraphics[width=.49\textwidth]{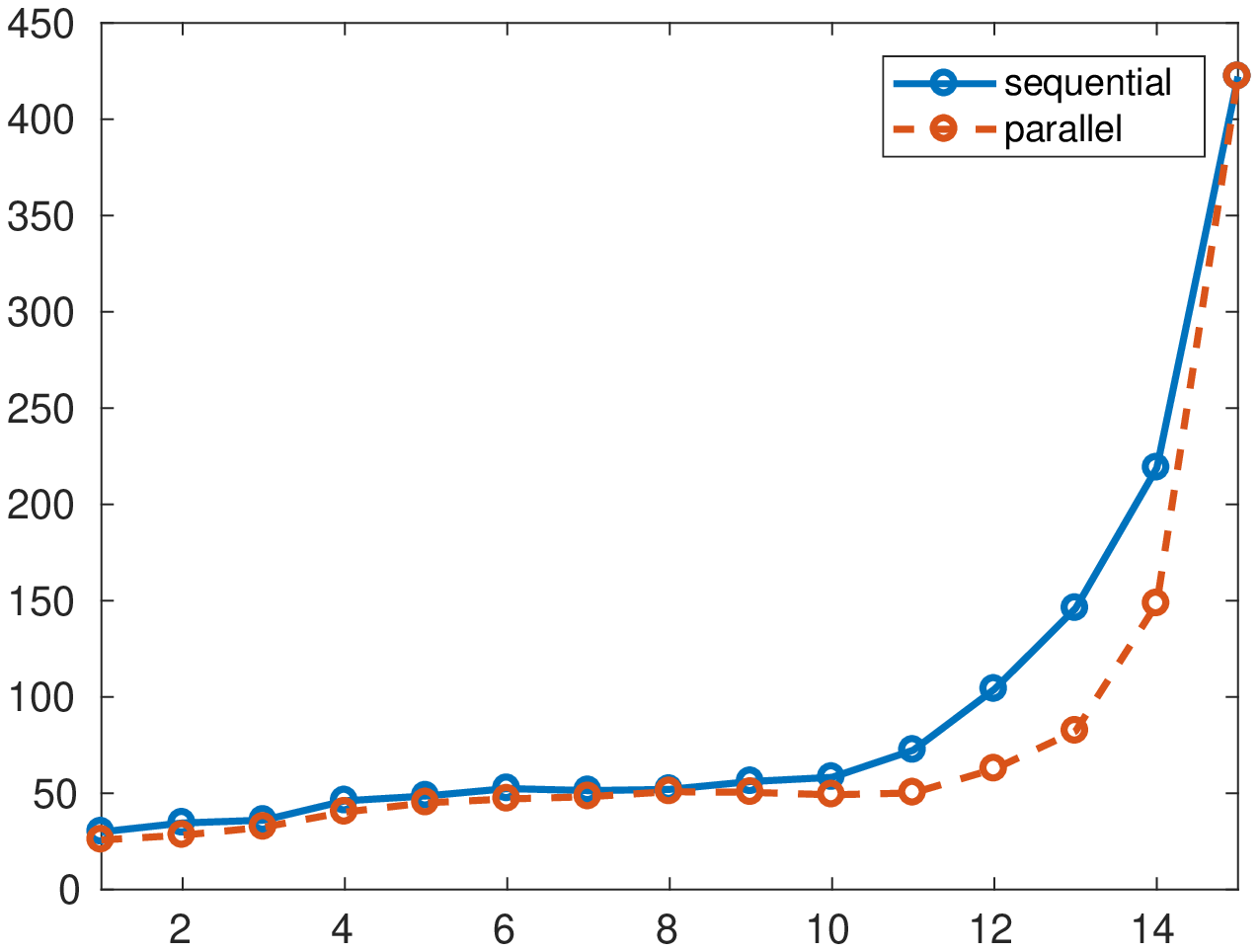}\hfil 
\includegraphics[width=.47\textwidth]{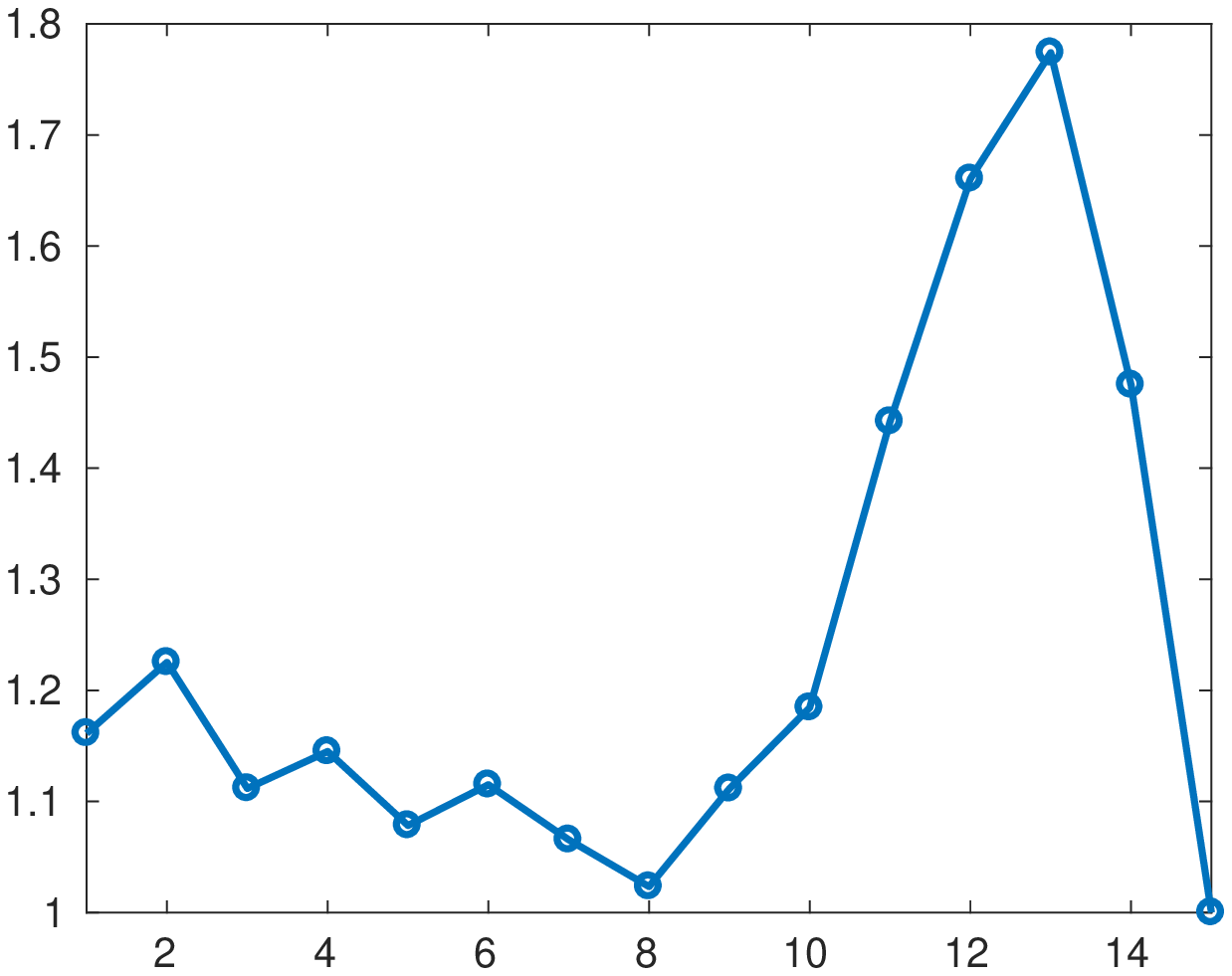}
\caption{Comparison between the sequential and the parallel versions of
Algorithm~\ref{alg:spectrsort}: on the left the execution time in seconds, on
the right the parallel speedup, defined as the ratio between the sequential and
the parallel timings. The test matrix is of dimension $2^{15}=32768$, the size
of each reducible block is $2^j$, where $j$ is  the index reported in the
horizontal axis.}
\label{exper2}
\end{center}
\end{figure}

We let the size of the problem be $n=2^{15}=32768$ and, for $j=1,2\ldots,15$,
we generate a sequence of test matrices containing $n\cdot 2^{-j}$ blocks, each
of size $2^j$.

We apply the function \ttt{spectrsort} that implements
Algorithm~\ref{alg:spectrsort} to the above problems, as well as its parallel
version \ttt{pspectrsort}, and record the execution time; see the file
\ttt{exper2.m}.
The number of processors available on our computer was 12.

The graph on the left of Figure~\ref{exper2} shows that there is a significant
advantage from running the toolbox on a parallel computing system when the
network associated to the problem is composed by a small number of large
connected components.
This is confirmed by the plot of the parallel speedup, that is, the ratio
between the timings of the sequential and the parallel implementations,
displayed in the graph on the right in the same figure.

\begin{figure}[hbt]
\begin{center}
\includegraphics[width=.30\textwidth]{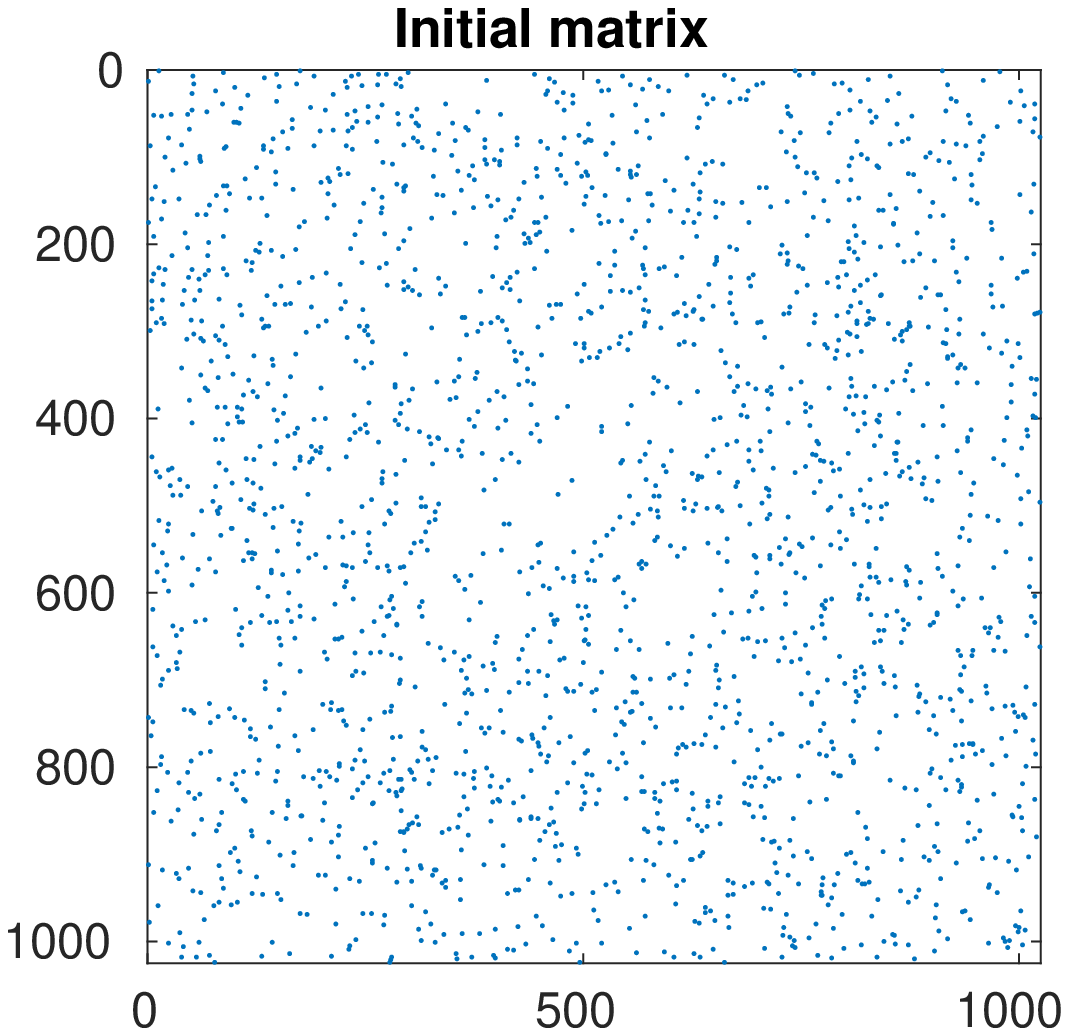}\hfil 
\includegraphics[width=.30\textwidth]{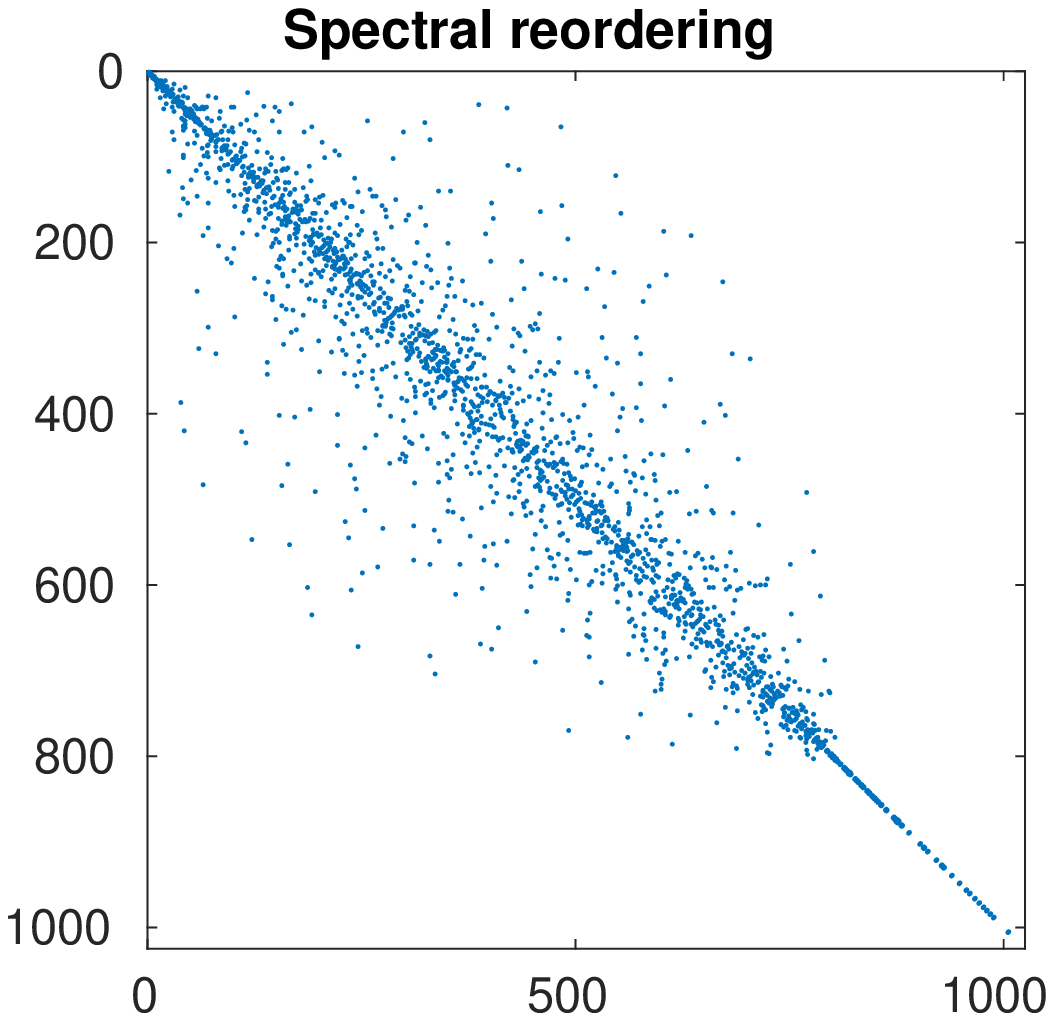}\hfil 
\includegraphics[width=.30\textwidth]{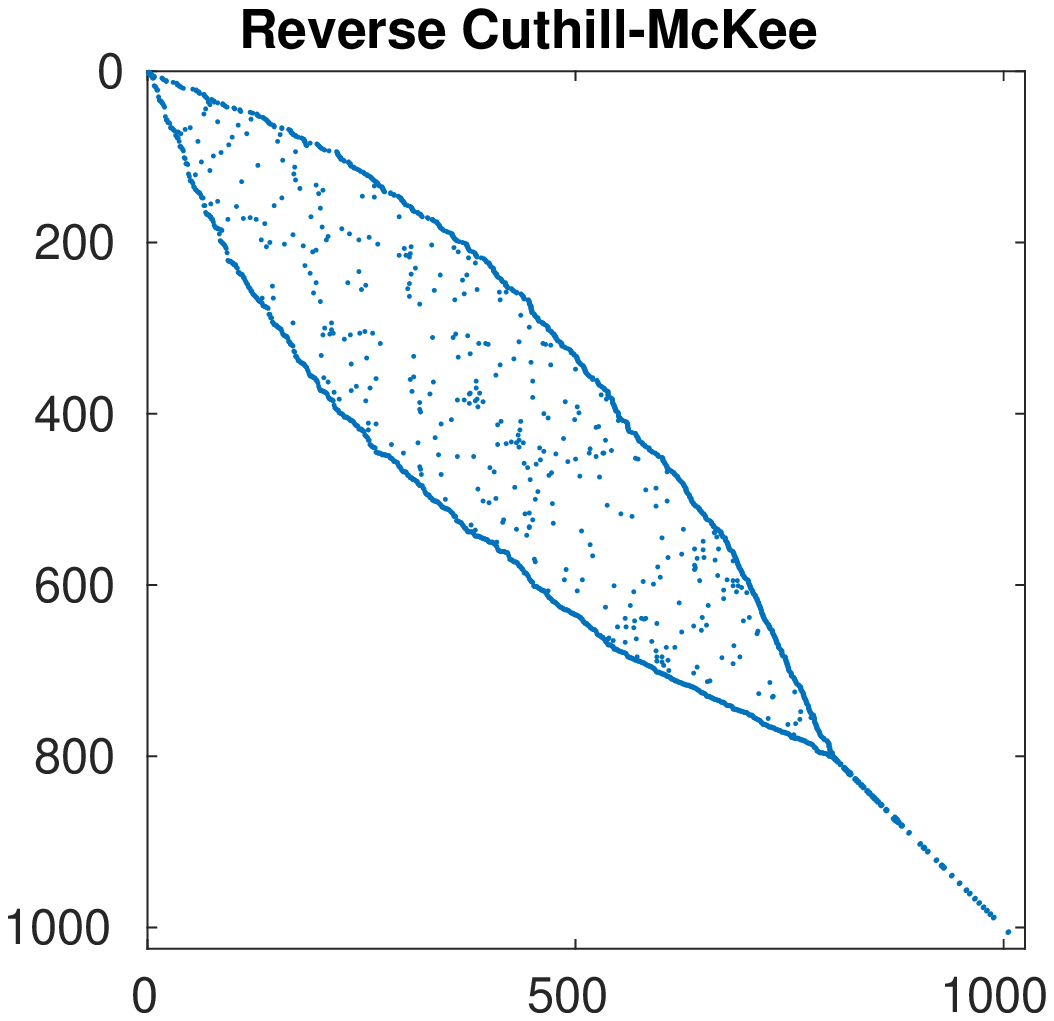}
\caption{Bandwidth reduction of a sparse matrix of size 1024: the three spy
plots display the initial matrix, the reordered matrix resulting from the
spectral algorithm, and the one produced by the \ttt{symrcm} function of
Matlab.}
\label{exper3}
\end{center}
\end{figure}

To conclude, we consider another important application of the reordering
defined by the Fiedler vector of the Laplacian, namely, the reduction of the
bandwidth for a sparse matrix; see~\cite{barnard1995spectral}.

We generate a sparse symmetric matrix of size $n=1024$, having approximately
0.2\% nonzero elements, and reorder its rows and columns by
Algorithm~\ref{alg:spectrsort}.
Notice that in this case the spectral algorithm must be applied to a matrix
whose elements are taken in absolute value.
The computation is described in the script \ttt{exper3.m}.

The resulting matrix is depicted by displaying its nonzero pattern in
Figure~\ref{exper3}, where it is compared to the reverse Cuthill-McKee
ordering, as implemented in the \ttt{symrcm} function of Matlab.
The spectral algorithm appears to be less effective than \ttt{symrcm},
leading to a reordered matrix with a wider band.
This is due to the fact that \ttt{spectrsort} aims at placing the largest
entries close to the diagonal, and this does not necessarily produce the
maximal bandwidth reduction.
Experimenting with sparser matrices we observed that often the two methods
produce similar results.

We remark that, also in this application, the presence of a multiple Fiedler
value may constitute a problem.
For example, we were not able to correctly process the Matlab test matrix
\ttt{bucky} (the connectivity graph of the Buckminster Fuller geodesic dome),
because the associated Laplacian possesses a triple Fiedler value.

\section{Conclusions}\label{sec:last}

In this paper we present a new Matlab toolbox principally aimed to the
solution of the seriation problem, but which can be applied to other related
problems.

It is based on a spectral algorithm introduced in~\cite{atkins1998spectral},
and contains an implementation of PQ-trees as well as some tools for their
manipulation, including an interactive visualization tool.
The implemented algorithm includes the possibility to choose between a small
scale and a large scale algorithm for the computation of the Fiedler vector,
and to detect equal components in the same vector according to a chosen
tolerance. Further, a parallel version of the method is provided.

We also point out the importance of the presence of multiple Fiedler values, a
problem which has not been considered before in the literature and which has a
significant influence on the computation of an approximate solution to the
seriation problem.

The use of the toolbox is illustrated by a few practical examples, and its
performance is investigated through a set of numerical experiments, both of
small and large scale.

\section{Acknowledgements}

We would like to thank Matteo Sommacal for pointing our attention to the
problem of seriation and to its application in archaeology.
The paper~\cite{ps05}, which he coauthored, was our principal source of
information when the research which lead to this paper started.

\bibliographystyle{plain}	
\bibliography{bibliogr}

\end{document}